

\documentclass[final,1p]{elsarticle}
\usepackage{etoolbox}
\makeatletter
\patchcmd{\ps@pprintTitle}{\footnotesize\itshape
       Preprint submitted to \ifx\@journal\@empty Elsevier
       \else\@journal\fi\hfill\today}{\relax}{}{}
\makeatother

\usepackage{pslatex,bm} 
\usepackage{amsfonts,color,morefloats}
\usepackage{amssymb,amsmath,latexsym,amsthm}

\newcommand{\wt}{{\mathrm{wt}}}

\newcommand{\tr}{{\mathrm{Tr}}}
\newcommand{\gf}{{\mathrm{GF}}}

\newcommand{\C}{{\mathcal{C}}}

\newcommand{\cR}{{\mathcal{R}}}

\newcommand{\bc}{{\mathbf{c}}}

\newtheorem{theorem}{Theorem}

\newtheorem{open}{Research Problem}

\setcounter{page}{1}

\begin{document}

\begin{frontmatter}

\title{The construction and weight distributions of all projective binary linear codes
}


\author{Cunsheng Ding}
 \ead{cding@ust.hk}

 \address{Department of Computer Science and Engineering,
The Hong Kong University of Science and Technology, Clear Water Bay, Kowloon, Hong Kong}

\date{}

\begin{abstract} 
Boolean functions can be used to construct binary linear codes in many ways, and vice versa.  
The objective of this short article is to point out a connection between the weight distributions of all 
projective binary linear codes and the Walsh spectra of all Boolean functions. New research problems 
are also proposed. 
\end{abstract}

\begin{keyword}
Boolean function, linear code, Walsh transform, weight distribution 
\end{keyword}

\end{frontmatter}

\section{Introduction}\label{sec-intro} 

Let $q$ be a prime and let $r=q^m$ for some positive integer $m$. 
An $[n,\, k,\, d]$ code $\C$ over $\gf(q)$ is a $k$-dimensional subspace of $\gf(q)^n$ with minimum 
(Hamming) distance $d$. A linear code $\C$ is called \emph{projective} if its dual code has minimum 
distance at least $3$. 
Let $A_i$ denote the number of codewords with Hamming weight $i$ in a code
$\C$ of length $n$. The {\em weight enumerator} of $\C$ is defined by
$
1+A_1z+A_2z^2+ \cdots + A_nz^n.
$ 
The sequence $(1, A_1, A_2, \cdots, A_n)$ is called the \emph{weight distribution} of the code $\C$. 
A code $\C$ is said to be a $t$-weight code  if the number of nonzero
$A_i$ in the sequence $(A_1, A_2, \cdots, A_n)$ is equal to $t$.

Boolean functions are functions from $\gf(2^m)$ or $\gf(2)^m$ to $\gf(2)$, where $m$ is a positive integer. They are important building blocks 
for certain types of stream ciphers, and can also be employed to construct binary codes in many ways. 
Conversely, binary linear codes can be used to construct Boolean functions in different ways. The objective 
of this article is to point out a connection between the weight distributions of all projective binary linear codes and the Walsh spectra of all Boolean functions. New research problems are also proposed. 

\section{Mathematical foundations} 

\subsection{Group characters in $\gf(q)$}

An {\em additive character} of $\gf(q)$ is a nonzero function $\chi$ 
from $\gf(q)$ to the set of nonzero complex numbers such that 
$\chi(x+y)=\chi(x) \chi(y)$ for any pair $(x, y) \in \gf(q)^2$. 
For each $b\in \gf(q)$, the function
\begin{eqnarray}\label{dfn-add}
\chi_b(c)=\epsilon_p^{\tr(bc)} \ \ \mbox{ for all }
c\in\gf(q) 
\end{eqnarray}
defines an additive character of $\gf(q)$, where and whereafter $\epsilon_p=e^{2\pi \sqrt{-1}/p}$ is 
a primitive complex $p$th root of unity and $\tr$ is the absolute trace function. When $b=0$,
$\chi_0(c)=1 \mbox{ for all } c\in\gf(q), 
$ 
and is called the {\em trivial additive character} of
$\gf(q)$. The character $\chi_1$ in (\ref{dfn-add}) is called the
{\em canonical additive character} of $\gf(q)$. 
It is known that every additive character of $\gf(q)$ can be 
written as $\chi_b(x)=\chi_1(bx)$ \cite[Theorem 5.7]{LN97}.

\subsection{Boolean functions and their expressions} 

A function $f$ from $\gf(2^m)$ or $\gf(2)^m$ to $\gf(2)$ is called a {\em Boolean function}. 
A function $f$ from $\gf(2^m)$ to $\gf(2)$ is called {\em linear} if $f(x+y)=f(x)+f(y)$ for all $(x, y) \in \gf(2^m)^2$. 
A function $f$ from $\gf(2^m)$ to $\gf(2)$ is called {\em affine} if $f$ or $f-1$ is linear. 

The {\em Walsh transform} of $f: \gf(2^m) \to \gf(2)$ is defined by 
\begin{eqnarray}\label{eqn-WalshTransform2}
\hat{f}(w)=\sum_{x \in \gf(2^m)} (-1)^{f(x)+\tr(wx)} 
\end{eqnarray} 
where $w \in \gf(2^m)$. The {\em Walsh spectrum} of $f$ is the following multiset 
$$ 
\left\{\left\{ \hat{f}(w): w \in \gf(2^m) \right\}\right\}. 
$$ 

Let $f$ be a Boolean function from $\gf(2^m)$ to $\gf(2)$. The \emph{support} of $f$ is defined to be 
\begin{eqnarray}\label{eqn-Booleanfsupport}
D_f=\{x \in\gf(2^m) : f(x)=1\} \subseteq \gf(2^m). 
\end{eqnarray}
Clearly, $f \mapsto D_f$ is a one-to-one correspondence between the set of Boolean functions from  
$\gf(2^m)$ to $\gf(2)$  and the power set of $\gf(2^m)$.

\section{A fundamental construction of linear codes}\label{sec-2ndgconst}

Throughout this section, let $q$ be a prime power and let $r=q^m$, where $m$ is a positive integer. 
Let $\tr$ denote the trace function from $\gf(r)$ to $\gf(q)$ unless otherwise stated. 

Let $D=\{d_1, \,d_2, \,\ldots, \,d_n\} \subseteq \gf(r)$. 
We define a code of 
length $n$ over $\gf(q)$ by 
\begin{eqnarray}\label{eqn-maincode191} 
\C_{D}=\{(\tr(xd_1), \tr(xd_2), \ldots, \tr(xd_n)): x \in \gf(r)\},   
\end{eqnarray}  
and call $D$ the \emph{defining set} of this code $\C_{D}$. 
Since the trace function is linear, the code $\C_D$ is linear.  By definition, the 
dimension of the code $\C_D$ is at most $m$. 

Different orderings of the elements of $D$ give different linear codes $\C_{D}$, which are however 
permutation equivalent. Hence, we do not distinguish these codes obtained 
by different orderings of the elements in $D$. It should be noticed that the defining set $D$ could be a multiset, i.e., some elements in $D$ may be 
the same.

Define for each $x \in \gf(r)$, 
\begin{eqnarray*}\label{eqn-mcodeword}
\bc_{x}=(\tr(xd_1), \,\tr(xd_2), \,\ldots, \,\tr(xd_n)).    
\end{eqnarray*} 
The Hamming weight $\wt(\bc_x)$ of $\bc_x$ is $n-N_x(0)$, where  
$$ 
N_x(0)=\left|\{1 \le i \le n: \tr(xd_i)=0\}\right| 
$$ 
for each $x \in \gf(r)$. 

It is easily seen that for any $D=\{d_1,\,d_2,\,\ldots, \,d_n\} \subseteq \gf(r)$
we have 
\begin{eqnarray*}\label{eqn-hn3}  
qN_x(0) 
= \sum_{i=1}^n \sum_{y \in \gf(q)} \tilde{\chi}_1 (y\tr(xd_i)) \nonumber 
= n + \sum_{y \in \gf(q)^*} \chi_1(yxD),  
\end{eqnarray*} 
where $\chi_1$ and $\tilde{\chi}_1$ are the canonical additive characters of $\gf(r)$ and $\gf(q)$, respectively, 
$aD$ denotes the set 
$\{ad: d \in D\}$, and $\chi_1(S):=\sum_{x \in S} \chi_1(x)$ for any subset $S$ of $\gf(r)$.  
Hence, 
\begin{eqnarray}\label{eqn-weight}
\wt(\bc_x)=n-N_x(0)=\frac{(q-1)n-\sum_{y \in \gf(q)^*} \chi_1(yxD)}{q}. 
\end{eqnarray}
Thus, the computation of the weight distribution of the code $\C_D$ reduces to the determination of the value distribution 
of the character sum 
$$ 
\sum_{y \in \gf(q)^*} \sum_{i=1}^n \chi_1(yxd_i). 
$$

This construction technique was employed many years ago  for obtaining linear codes with a few weights 
(see, for example, \cite{Wolfmann75},  \cite{DN07}, \cite{DLN} and \cite{Dingconf}), and is called the 
defining-set construction of linear codes. Recently, this trace construction of linear codes has attracted 
a lot of attention, and a huge amount of linear codes with good parameters have been obtained. The 
ollowing theorem shows that the trace construction is fundamental. 

\begin{theorem}\label{thm-defsetcode}
Every $[n, k]$ code over $\gf(q)$ can be expressed as $\C_D$ for some defining set 
$D \subseteq \gf(q^k)$.      
\end{theorem}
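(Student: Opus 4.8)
The plan is to realize an arbitrary $[n,k]$ code through a generator matrix and then convert its columns into a defining set by exploiting the nondegeneracy of the trace form on $\gf(q^k)$ viewed as a $\gf(q)$-vector space. First I would fix a generator matrix $G=(G_{j,i})$ of $\C$, a $k\times n$ matrix over $\gf(q)$ of full row rank $k$, so that the codewords of $\C$ are exactly the vectors $\mathbf{a}G$ with $\mathbf{a}=(a_1,\ldots,a_k)\in\gf(q)^k$, whose $i$-th coordinate is $\sum_{j=1}^k a_j G_{j,i}$.

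Next I would set $r=q^k$ and transport everything to $\gf(q^k)$. Choosing a basis $\alpha_1,\ldots,\alpha_k$ of $\gf(q^k)$ over $\gf(q)$, the symmetric bilinear form $(u,v)\mapsto\tr(uv)$ is nondegenerate, so there is a trace-dual basis $\alpha_1^*,\ldots,\alpha_k^*$ characterized by $\tr(\alpha_j\alpha_l^*)=\delta_{jl}$. I would then use the $\gf(q)$-linear isomorphism $\mathbf{a}\mapsto x=\sum_{j=1}^k a_j\alpha_j$ from $\gf(q)^k$ onto $\gf(q^k)$ to identify messages with field elements, and define the candidate defining set by $d_i=\sum_{l=1}^k G_{l,i}\alpha_l^*$ for $1\le i\le n$, so that $D=\{d_1,\ldots,d_n\}\subseteq\gf(q^k)$ (allowed to be a multiset, with a zero column simply contributing $d_i=0$).

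The verification is then a one-line bilinearity computation: $\tr(xd_i)=\sum_{j,l}a_jG_{l,i}\tr(\alpha_j\alpha_l^*)=\sum_{j=1}^k a_jG_{j,i}$, which is precisely the $i$-th coordinate of $\mathbf{a}G$. Hence $\bc_x=\mathbf{a}G$ for the corresponding pair $(x,\mathbf{a})$. Since $x\mapsto\mathbf{a}$ is a bijection, letting $x$ run over $\gf(q^k)$ produces exactly the codewords $\mathbf{a}G$ with $\mathbf{a}$ running over $\gf(q)^k$, i.e.\ $\C_D=\C$; equality of the codeword sets also forces $\dim\C_D=k$, consistent with the dimension bound noted above.

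I expect the only conceptual point --- rather than a genuine obstacle --- to be the passage from the matrix description to the trace description, which hinges entirely on the fact that every $\gf(q)$-linear functional on $\gf(q^k)$ is of the form $x\mapsto\tr(dx)$ for a unique $d\in\gf(q^k)$; this is exactly the nondegeneracy of the trace form and is what underlies the existence of the dual basis. Everything else is routine linear algebra, so the argument should be short and self-contained.
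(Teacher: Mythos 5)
Your proposal is correct and is essentially identical to the paper's own proof: both fix a generator matrix, map messages $\mathbf{a}\in\gf(q)^k$ to field elements via a basis $\{\alpha_1,\ldots,\alpha_k\}$ of $\gf(q^k)$, define $d_i$ as the combination of the $i$-th column's entries against the trace-dual basis, and verify $\tr(xd_i)$ recovers the $i$-th coordinate of $\mathbf{a}G$. No differences worth noting.
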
 

\begin{proof}
Let $(g_{1j}, g_{2j}, \ldots, g_{kj})^T$ denote the $j$th column of a generator matrix 
of the code for $1 \leq j \leq n$. Define 
$$ 
f_j(x)=(x_1, x_2, \ldots, x_k)(g_{1j}, g_{2j}, \ldots, g_{kj})^T,  
$$ 
where $x=(x_1, x_2, \ldots, x_k) \in \gf(q)^k$. By definition, the code is the set  
$$ 
\{(f_1(x), f_2(x), \ldots, f_n(x)): x \in \gf(q)^k\}. 
$$ 

Let $\{\alpha_1, \alpha_2, \ldots, \alpha_k\}$ be a basis of $\gf(q^k)$ over $\gf(q)$, 
and let $\{\beta_1, \beta_2, \ldots, \beta_k\}$ denote its dual basis. For each $j$ 
with $1 \leq j \leq n$, define 
\begin{eqnarray}\label{eqn-oct9mor}
d_j=\sum_{i=1}^k g_{ij} \beta_i
\end{eqnarray} 
and $D=\{d_1, d_2, \ldots, d_n\} \subseteq \gf(q^k)$. 
For $x=(x_1, x_2, \ldots, x_k) \in \gf(q)^k$, define 
$$ 
x'=\sum_{i=1}^k x_i \alpha_i \in \gf(q^k). 
$$ 
Clearly, we have 
$$ 
\tr_{q^k/q}(d_j x')=\sum_{i=1}^k x_i g_{ij}=f_j(x). 
$$
Consequently, 
\begin{eqnarray*}
\lefteqn{\{(f_1(x), \ldots, f_n(x)): x \in \gf(q)^k\}} \\ 
&& =\{\tr_{q^k/q}(d_1x'), \ldots, \tr_{q^k/q}(d_nx'): x' \in \gf(q^k)\} \\
&& =\C_{D}. 
\end{eqnarray*} 
This completes the proof. 
\end{proof}

Theorem \ref{thm-defsetcode} is a direct consequence of the classical result that every linear function from 
$\gf(q^m)$ to $\gf(q)$ can be expressed as $\tr(ax)$ for some $a \in \gf(q^m)$ \cite{LN97}. The proof of 
Theorem \ref{thm-defsetcode}  clearly shows that the defining-set construction is equivalent to the generator 
matrix construction of all linear codes. Hence, it is impossible to find out the first one who introduced the 
defining-set construction.  This construction technique was employed many years ago  for obtaining linear 
codes with a few weights (see, for example, \cite{Wolfmann75},  \cite{DN07}, \cite{DLN} and \cite{Dingconf}). 
The 
weight formula in (\ref{eqn-hn3}) tells us that an advantage of the defining-set approach over the generator-matrix 
approach is that the former 
can make full use of results about character sums for determining the parameters and weight distributions  
of linear codes. This advantage has been demonstrated in a lot of recent references on the defining-set 
construction of linear codes.

A slightly different version of Theorem  \ref{thm-defsetcode} was proved in \cite{Xiang16}.  Theorem  \ref{thm-defsetcode} 
and its proof above are 
refined ones in \cite{HWW20}.

\section{The construction and weight distributions of all projective binary linear codes}\label{sec-supportcodes}

Let $f$ be a function from $\gf(2^m)$ to $\gf(2)$, and let $D_f$ be the support of $f$ defined in (\ref{eqn-Booleanfsupport}).  Let $n_f=|D_f|$.  The following theorem was proved in \cite{Ding15}. 

\begin{theorem}\label{thm-BooleanCodes} 
Let $f$ be a function from $\gf(2^m)$ to $\gf(2)$, and let $D_f$ be the support of $f$. If $2n_f + \hat{f}(w) \neq 0$ 
for all $w \in \gf(2^m)^*$, then $\C_{D_f}$ is a binary linear code with 
length $n_f$ and dimension $m$, and its weight distribution is given by the following multiset: 
\begin{eqnarray}\label{eqn-WTBcodes}
\left\{\left\{ \frac{2n_f+\hat{f}(w)}{4}: w \in \gf(2^m)^*\right\}\right\} \cup \left\{\left\{ 0 \right\}\right\}. 
\end{eqnarray} 
\end{theorem}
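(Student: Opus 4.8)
The plan is to specialize the general weight formula (\ref{eqn-weight}) to the binary case and then translate the resulting character sum into a Walsh transform value. With $q=2$ and $n=n_f$, the set $\gf(2)^*$ contains only the element $1$, and the canonical additive character of $\gf(2^m)$ satisfies $\chi_1(c)=(-1)^{\tr(c)}$; hence (\ref{eqn-weight}) collapses to
\[
\wt(\bc_w)=\frac{n_f-\chi_1(wD_f)}{2}, \qquad \chi_1(wD_f)=\sum_{x\in D_f}(-1)^{\tr(wx)}.
\]
First I would dispose of $w=0$: here $\bc_0$ is the zero codeword, of weight $0$, which accounts for the term $\{\{0\}\}$ in (\ref{eqn-WTBcodes}).

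The key step is to relate $\chi_1(wD_f)$ to $\hat f(w)$ for $w\neq 0$. Splitting the Walsh sum (\ref{eqn-WalshTransform2}) according to the value of $f$, I would write $\hat f(w)=S_0-S_1$, where $S_0=\sum_{f(x)=0}(-1)^{\tr(wx)}$ and, by the definition (\ref{eqn-Booleanfsupport}) of the support, $S_1=\sum_{f(x)=1}(-1)^{\tr(wx)}=\chi_1(wD_f)$. For $w\neq 0$ the orthogonality relation $\sum_{x\in\gf(2^m)}(-1)^{\tr(wx)}=0$ gives $S_0+S_1=0$, so $\hat f(w)=-2S_1=-2\chi_1(wD_f)$. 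Substituting $\chi_1(wD_f)=-\hat f(w)/2$ into the displayed weight formula yields $\wt(\bc_w)=(2n_f+\hat f(w))/4$, which is exactly the family of values appearing in (\ref{eqn-WTBcodes}).

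It remains to pin down the length and the dimension, and here the hypothesis $2n_f+\hat f(w)\neq 0$ does the work. The map $x\mapsto\bc_x$ is $\gf(2)$-linear because the trace is linear, so its image $\C_{D_f}$ has dimension $m$ precisely when the map is injective, i.e. when its kernel is trivial. A nonzero $w$ lies in the kernel iff $\bc_w=\bzero$ iff $\wt(\bc_w)=0$ iff $2n_f+\hat f(w)=0$; the hypothesis rules this out, forcing the kernel to be $\{0\}$. Consequently the $2^m$ vectors $\bc_w$ are pairwise distinct, the dimension equals $m$, the length equals $n_f=|D_f|$, and the multiset of weights over all codewords is exactly (\ref{eqn-WTBcodes}). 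I do not expect a genuine obstacle: the computation of the weight is routine once the Walsh–character identity is isolated, and the only slightly subtle point is that the dimension claim is not automatic from the general construction (which guarantees only dimension at most $m$) but must be extracted from the nonvanishing hypothesis via the injectivity argument above.
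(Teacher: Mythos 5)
Your proof is correct and follows exactly the route the paper sets up: specializing the character-sum weight formula (\ref{eqn-weight}) to $q=2$, converting $\chi_1(wD_f)$ into $-\hat f(w)/2$ via orthogonality, and extracting the dimension from the nonvanishing hypothesis by injectivity of $x\mapsto\bc_x$. The paper itself does not reprove this theorem (it cites \cite{Ding15}), but your argument is precisely the intended specialization of the machinery of Section~\ref{sec-2ndgconst}, so there is nothing to flag.
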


Theorem \ref{thm-BooleanCodes} establishes a connection between the set of Boolean functions $f$ such that 
$2n_f + \hat{f}(w) \neq 0$ for all $w \in \gf(2^m)^*$ and a class of binary linear codes. 
The determination of the weight distribution of the binary linear code $\C_{D_f}$ is equivalent to that of the Walsh 
spectrum of the Boolean function $f$ satisfying $2n_f + \hat{f}(w) \neq 0$ for all $w \in \gf(2^m)^*$. 
When the Boolean function $f$ is selected properly, the code $\C_{D_f}$ has 
only a few weights and may have good parameters. A lot of binary linear codes $\C_{D_f}$ with a few weights were 
reported in \cite{Ding16}.   

Theorem \ref{thm-BooleanCodes} was generalized into the following in \cite{Ding16}. 

\begin{theorem}\label{thm-BooleanCodesG}
Let $f$ be a function from $\gf(2^m)$ to $\gf(2)$, and let $D_f$ be the support of $f$. Let $e_w$ denote the 
multiplicity of the element $\frac{2n_f+\hat{f}(w)}{4}$ and $e$ the multiplicity of 0  in the multiset 
of (\ref{eqn-WTBcodes}). 
Then $\C_{D_f}$ is a binary linear code with length $n_f$ and dimension $m-\log_2 e$, and 
the weight distribution of the code is given by  
\begin{eqnarray*}
\frac{2n_f+\hat{f}(w)}{4} \mbox{ with frequency } \frac{e_w}{e} 
\end{eqnarray*}
for all $\frac{2n_f+\hat{f}(w)}{4}$ in the multiset of (\ref{eqn-WTBcodes}). 
\end{theorem}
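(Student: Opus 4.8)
The plan is to exploit the $\gf(2)$-linearity of the evaluation map $x \mapsto \bc_x$ and to combine it with the weight bookkeeping already established in Theorem~\ref{thm-BooleanCodes}. First I would recall that, specializing (\ref{eqn-weight}) to $q=2$ and using $\chi_1(xD_f) = -\hat{f}(x)/2$ for $x \neq 0$, every codeword $\bc_x$ satisfies
\[
\wt(\bc_x) = \frac{2n_f + \hat{f}(x)}{4} \quad (x \neq 0), \qquad \wt(\bc_0) = 0.
\]
Thus the multiset in (\ref{eqn-WTBcodes}) is literally the multiset $\{\{\wt(\bc_x) : x \in \gf(2^m)\}\}$, containing one entry for each of the $2^m$ elements $x$ (the displayed $\{\{0\}\}$ being the contribution of $x=0$). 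This identification is the foundation for the counting to follow.

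Next I would observe that $\phi \colon \gf(2^m) \to \C_{D_f}$, $\phi(x) = \bc_x$, is $\gf(2)$-linear (the trace is additive) and surjective by the definition (\ref{eqn-maincode191}) of $\C_{D_f}$. Its kernel $V = \{x \in \gf(2^m) : \bc_x = \bzero\}$ is therefore an $\gf(2)$-subspace. Since the zero codeword is the unique codeword of weight $0$, one has $\wt(\bc_x)=0 \iff \bc_x=\bzero \iff x\in V$, so the number of $x$ with $\wt(\bc_x)=0$ equals $|V|$. By the identification of the previous paragraph this count is exactly the multiplicity $e$ of $0$ in (\ref{eqn-WTBcodes}); hence $|V|=e$, $e$ is a power of $2$, and rank-nullity gives $\dim_{\gf(2)} \C_{D_f} = m - \dim_{\gf(2)} V = m - \log_2 e$, the claimed dimension.

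Finally, for the weight distribution I would use that every fiber of $\phi$ is a coset of $V$ and so has exactly $|V|=e$ elements. Consequently each codeword of $\C_{D_f}$ is represented by precisely $e$ values of $x$, which means the multiset $\{\{\wt(\bc_x):x\in\gf(2^m)\}\}$ is $e$ disjoint copies of the genuine weight multiset $\{\{\wt(\bc):\bc\in\C_{D_f}\}\}$. Dividing every multiplicity by $e$ therefore recovers the number of codewords of each weight: a value $\frac{2n_f+\hat{f}(w)}{4}$ occurring $e_w$ times in (\ref{eqn-WTBcodes}) corresponds to exactly $e_w/e$ codewords of that weight, as asserted.

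I do not anticipate a serious obstacle, since the whole argument rests on the single structural fact that $\phi$ is a surjective linear map with kernel of size $e$. The only point demanding care is the multiset bookkeeping: one must verify cleanly that (\ref{eqn-WTBcodes}) lists $\wt(\bc_x)$ once for every $x \in \gf(2^m)$, and in particular that the zeros in the multiset (the explicit $\{\{0\}\}$ together with any $w \neq 0$ for which $\hat{f}(w) = -2n_f$) account for precisely the $|V|$ elements of the kernel. Establishing this correspondence is the crux; once it is in place, the dimension and the frequencies $e_w/e$ follow immediately from rank-nullity and coset counting.
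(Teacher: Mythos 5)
Your proof is correct: the identification of the multiset in (\ref{eqn-WTBcodes}) with $\{\{\wt(\bc_x):x\in\gf(2^m)\}\}$ via (\ref{eqn-weight}) and $\chi_1(xD_f)=-\hat{f}(x)/2$, followed by rank--nullity for the surjective linear map $x\mapsto \bc_x$ and the observation that every fiber is a coset of the kernel of size $e$, is exactly the right argument, and you correctly verify the one delicate point that the multiplicity of $0$ in (\ref{eqn-WTBcodes}) counts the kernel. The paper itself does not reprove this theorem (it defers to \cite{Ding16}), but your argument is the standard one and is consistent with the machinery the paper uses for Theorem~\ref{thm-BooleanCodes}.
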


Theorem \ref{thm-BooleanCodesG} says that every Boolean function can be used to construct a binary linear code 
with the defining-set construction whose weight distribution is determined by the Walsh spectrum of the Boolean 
function. Conversely, we have the following.  

\begin{theorem}\label{thm-oct16morning}
Let $\C$ be any projective binary linear code. Then there is a Boolean function $f$ such that $\C=\C_{D_f}$. Furthermore, 
Let $f$ be a function from $\gf(2^m)$ to $\gf(2)$, and let $D_f$ be the support of $f$. Let $e_w$ denote the 
multiplicity of the element $\frac{2n_f+\hat{f}(w)}{4}$ and $e$ the multiplicity of 0  in the multiset 
of (\ref{eqn-WTBcodes}). 
Then $\C_{D_f}$ is a binary linear code with length $n_f$ and dimension $m-\log_2 e$, and 
the weight distribution of the code is given by  
\begin{eqnarray*}
\frac{2n_f+\hat{f}(w)}{4} \mbox{ with frequency } \frac{e_w}{e} 
\end{eqnarray*}
for all $\frac{2n_f+\hat{f}(w)}{4}$ in the multiset of (\ref{eqn-WTBcodes}).   
\end{theorem}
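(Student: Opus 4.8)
The plan is to prove the two assertions of Theorem~\ref{thm-oct16morning} essentially independently, since the second half is verbatim Theorem~\ref{thm-BooleanCodesG}. The only genuinely new content is the first sentence: every projective binary linear code $\C$ arises as $\C_{D_f}$ for some Boolean function $f$. So the bulk of the work is establishing this surjectivity claim, and the second half is then invoked directly.

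First I would set up the correspondence via Theorem~\ref{thm-defsetcode}. Let $\C$ be an $[n,k]$ projective binary linear code. By Theorem~\ref{thm-defsetcode} (with $q=2$), we may write $\C=\C_D$ for some defining set $D=\{d_1,\ldots,d_n\}\subseteq\gf(2^k)$. The key point I would extract is that \emph{projectivity forces $D$ to be a genuine set rather than a multiset, and to avoid $0$}. Concretely, the dual distance being at least $3$ means no two columns of the generator matrix are equal (distance $\geq 2$ rules out zero columns, distance $\geq 3$ rules out repeated columns); under the dictionary of Theorem~\ref{thm-defsetcode} these columns correspond bijectively to the elements $d_j$ via \eqref{eqn-oct9mor}, so the $d_j$ are pairwise distinct and nonzero. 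Hence $D\subseteq\gf(2^k)^*$ is a set with $|D|=n$.

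Having a true subset $D\subseteq\gf(2^k)$, I would then define $f:\gf(2^k)\to\gf(2)$ to be the indicator function of $D$, i.e., $f(x)=1$ iff $x\in D$; this is the inverse of the bijection $f\mapsto D_f$ noted after \eqref{eqn-Booleanfsupport}. By construction $D_f=D$ and $n_f=|D_f|=n$, so $\C_{D_f}=\C_D=\C$, which proves the first assertion with $m$ taken to be $k$. The remaining statements about length, dimension $m-\log_2 e$, and the weight distribution $\frac{2n_f+\hat f(w)}{4}$ with frequency $e_w/e$ are exactly the content of Theorem~\ref{thm-BooleanCodesG}, so I would simply apply that theorem to the $f$ just constructed and quote its conclusion.

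The step I expect to be the main obstacle is the careful translation of the projectivity hypothesis (dual minimum distance $\geq 3$) into the two combinatorial conditions ``$d_j\neq 0$ for all $j$'' and ``$d_i\neq d_j$ for $i\neq j$.'' This requires checking that the column-to-$d_j$ map in the proof of Theorem~\ref{thm-defsetcode} is a bijection and that linear dependence among dual codewords corresponds precisely to coincidences among columns; one must also confirm that the dimension stays $k$, i.e., that passing to the indicator function does not collapse the code (equivalently, that the multiplicity $e$ of $0$ in \eqref{eqn-WTBcodes} equals $1$ here, consistent with $\dim=k-\log_2 1=k$). Everything else is bookkeeping built on the two earlier theorems.
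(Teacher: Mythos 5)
Your proposal is correct and follows essentially the same route as the paper: invoke Theorem~\ref{thm-defsetcode} to write $\C=\C_D$, observe that projectivity forces $D$ to have no repeated elements, take $f$ to be the characteristic function of $D$ so that $D_f=D$, and then quote Theorem~\ref{thm-BooleanCodesG} for the weight distribution. Your additional remarks (that $D$ avoids $0$, and that one should check the column-to-$d_j$ correspondence and the value of $e$) are finer-grained than the paper's one-line justification but do not change the argument.
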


\begin{proof}
By Theorem \ref{thm-defsetcode}, there is a set $D \subset \gf(2^m)$ for some positive integer $m$ such that 
$\C=\C_{D}$. Since $\C$ is projective, $D$ does not contain repeated elements. We now define a Boolean function 
$f$ from $\gf(2^m)$ to $\gf(2)$ as 
\begin{eqnarray}\label{eqn-oct16morning}
f(x) = \left\{ 
\begin{array}{ll}
1 & \mbox{ if } x \in D, \\
0 & \mbox{ otherwise.} 
\end{array}
\right. 
\end{eqnarray}  
By definition, the support $D_f$ of $f$ is $D$. It then follows that 
$$ 
\C=\C_{D}=\C_{D_f}. 
$$ 
The desired conclusion on the weight distribution of $\C$ then follows from Theorem \ref{thm-BooleanCodesG}. 
\end{proof} 

Note that Theorem \ref{thm-oct16morning} is a direct consequence of Theorems \ref{thm-defsetcode} 
and  \ref{thm-BooleanCodesG}. It gives a general approach to the computation of the weight distribution 
of projective binary linear codes $\C_D$. The procedure is the following. 
\begin{itemize}
\item Let $\C_D$ be a projective binary linear code constructed with the defining-set approach, where $D \subset \gf(2^m)$ 
for some positive $m$ and $D$ does not contain repeated elements. The first step is to construct the characteristic 
Boolean function $f$ of $D$, which was defined in (\ref{eqn-oct16morning}). 
\item The second step is to compute the Walsh spectrum of the Boolean function $f$.   
\end{itemize}  
Hence, determining the weight distribution of any projective binary linear code is equivalent to 
determining the Walsh spectrum of the corresponding Boolean function. 

The proofs of Theorems \ref{thm-defsetcode} and \ref{thm-oct16morning} clearly show that every projective binary 
linear code $\C$ with dimension $k$ gives a Boolean function $f_{\C}$ from $\gf(2^k)$ to $\gf(2)$ whose Walsh 
spectrum is completely determined by the weight distribution of $\C$. The Boolean function $f_{\C}(x)$ is constructed 
as follows: 
\begin{itemize}
\item Select a generator matrix of $\C$ and a basis of $\gf(2^k)$ over $\gf(2)$ (see the proof of Theorem \ref{thm-defsetcode}). 
\item Construct the set $D=\{d_1, d_2, \ldots, d_n\}$, where $d_j$ was defined in 
 (\ref{eqn-oct9mor}). 
\item Construct $f_{\C}$ as the characteristic function of $D$, i.e., $f_{\C}(x)=1$ if and only if $x \in D$.          
\end{itemize}  
This $f_{\C}(x)$ depends on the choice of the generator matrix and a basis of $\gf(2^k)$ over $\gf(2)$. 
Hence, such a code $\C$ gives many Boolean functions with the same Walsh spectrum. Below we propose 
some research problems in this direction. 

\begin{open} 
Let $\C$ denote the binary Golay code with parameters $[23, 12, 7]$ or its extended code.   Study the Boolean functions $f_{\C}(x)$ 
from $\gf(2^{12})$ to $\gf(2)$. 
\end{open} 

\begin{open} 
Let $\C$ denote the binary MacDonald code with parameters $[2^k-2, k, 2^{k-1}-1]$ (i.e., a punctured 
code of the binary Simplex code with parameters $[2^k-1, k, 2^{k-1}]$. Then $\C$ has only two nonzero weights $2^{k-1}$ 
and $2^{k-1}-1$. Study the Boolean functions $f_{\C}(x)$. 
\end{open} 

\begin{open} 
Let $\cR_2(\ell, m)$ denote the binary Reed-Muller code of order $1 \leq \ell<m$. Study the Boolean functions 
$f_{\cR_2(\ell, m)}(x)$. 
\end{open} 

\begin{open} 
Let $m \geq 3$ and let $C$ denote the binary Hamming code with parameters $[2^m-1, 2^m-1-m, 3]$. 
Study the Boolean functions $f_{\C}(x)$. 
\end{open} 

\begin{open} 
Let $\C$ denote a binary irreducible cyclic code.  Study the Boolean functions $f_{\C}(x)$.
\end{open} 

\begin{open} 
Let $\C$ denote a binary BCH code. Study the Boolean functions $f_{\C}(x)$. 
\end{open} 

\begin{open} 
For binary quadratic residue codes $\C$, study the Boolean functions $f_{\C}(x)$.
\end{open} 

\begin{open} 
Let $m \geq 4$ be even and let $\C$ denote the binary linear code with parameters $[2^m, m+2, 2^{m-1}-2^{(m-2)/2}]$ 
in Theorem 14.4 in \cite[p. 336]{Dingbk18}, study the Boolean functions $f_{\C}(x)$ from $\gf(2^{m+2})$ to 
$\gf(2)$. 
\end{open} 

\begin{open} 
Let $m \geq 4$ be even and let $\C$ denote the binary linear code with parameters $[2^{m-1}-2^{(m-2)/2}, \, m+1, \,  2^{m-2}-2^{(m-2)/2}]$ 
in Theorem 14.9 in \cite[p. 341]{Dingbk18}, study the Boolean functions $f_{\C}(x)$ from $\gf(2^{m+1})$ to 
$\gf(2)$. 
\end{open} 

There are a lot of binary linear codes $\C$ with only two nonzero weights documented in \cite{CK85}. 
The corresponding Boolean functions $f_{\C}$ should be very interesting.  It would be very interesting to study 
the Boolean functions $f_{\C}(x)$ for binary three-weight codes $\C$. There are many three-weight binary codes 
in the literature. Little work in this direction is done. The reader is cordially invited to join the venture in this 
direction. 

\section{A special case of the defining-set construction of linear codes} 

Let notation be the same as in Section \ref{sec-2ndgconst}. In this section, we consider a special case of 
the defining-set construction in \eqref{eqn-maincode191}.  

Assume that $m=2h$. Let $\{u_1, u_2\}$ be a basis of $\gf(q^m)$ over $\gf(q^h)$, and let $\{v_1, v_2\}$ 
be its dual basis. Then 
$$ 
d_i=d_{i,1}v_1+d_{i,2}v_2 
$$
where $d_{i,j} \in \gf(q^h)$. Similarly each $x \in \gf(q^m)$ can be expressed as 
$$ 
x=x_1u_1+x_2u_2, 
$$
where $x_i \in \gf(q^h)$. It then follows that 
$$ 
\tr_{q^m/q}(xd_i)= \tr_{q^h/q}(\tr_{q^m/q^h}(xd_i))= \tr_{q^h/q}(d_{i,1}x_1+d_{i,2}x_2). 
$$
Consequently, the code in \eqref{eqn-maincode191} can be expressed as 
\begin{eqnarray}\label{eqn-oct72020}
\C_{D} &=& \{ ( \tr_{q^h/q}(d_{i,1}x_1+d_{i,2}x_2) )_{i=1}^n: (x_1, x_2) \in \gf(q^h) \times \gf(q^h)  \}  \nonumber \\
            &=&  \{ ( \tr_{q^h/q}(e_1x_1+e_2x_2) )_{(e_1, e_2) \in E}: (x_1, x_2) \in \gf(q^h) \times \gf(q^h)  \},   
\end{eqnarray}
where $E=\{(d_{1,1}, d_{1,2}),  (d_{2,1}, d_{2,2}), \ldots, (d_{n,1}, d_{n,2})\}$. Thus, the construction of \eqref{eqn-oct72020} 
is a special case of the general defining-set construction, and was studied in some recent papers. 

Let $q=2$ and consider the code $\C_E$ in \eqref{eqn-oct72020}. Let $f$ be the characteristic Boolean function of $E$. 
Then $f$ is a function from $\gf(2^h) \times \gf(2^h)$ to $\gf(2)$. Then the weight distribution of the binary code $\C_E$ in \eqref{eqn-oct72020} is given in a similar way as in Theorem \ref{thm-oct16morning}.

\section{Concluding remarks} 

The Boolean function construction of projective binary linear codes $\C_{D_f}$ gives a coding-theoretical characterisation 
of bent and other special Boolean functions with Theorems  \ref{thm-BooleanCodesG} and 
\ref{thm-oct16morning}. For example, $f$ is bent if and only if the code $\C_{D_f}$ has the weight distribution in 
Table II in \cite{Ding15}. In addition, other connections between projective binary codes and Boolean functions could 
also be developed.

\end{document}